\newtheorem{theorem}{Theorem}
\newtheorem{proposition}{Proposition}
\begin{document}
%

\title{Mining Representative Unsubstituted Graph Patterns Using Prior Similarity Matrix\titlenote{An implementation of the proposed approach and the datasets used in the experiments are freely available on the first authors personal web page: http://fc.isima.fr/$\sim$dhifli/unsubpatt/ or by email request to any one of the authors}\titlenote{This paper is the full version of a preliminary work accepted as abstract in MLCB'12 (NIPS workshop)}}
%
%
%
%
%

\numberofauthors{3} 
%
\author{
%
%
\alignauthor
Wajdi Dhifli\\
      \affaddr{Clermont University, Blaise Pascal University, LIMOS, BP 10448, F-63000 Clermont-Ferrand, France}\\
       \affaddr{CNRS, UMR 6158, LIMOS, F-63173 Aubi\`ere, France}\\
       \email{dhifli@isima.fr}
\alignauthor
Rabie Saidi\\
       \affaddr{European Bioinformatics Institute}\\
       \affaddr{Hinxton, Cambridge, CB10 1SD, United Kingdom}\\
       \email{rsaidi@ebi.ac.uk}
\alignauthor 
Engelbert Mephu Nguifo\\
      \affaddr{Clermont University, Blaise Pascal University, LIMOS, BP 10448, F-63000 Clermont-Ferrand, France}\\
       \affaddr{CNRS, UMR 6158, LIMOS, F-63173 Aubi\`ere, France}\\
       \email{mephu@isima.fr}
}

\maketitle
\begin{abstract}
One of the most powerful techniques to study protein structures is to look for recurrent fragments (also called substructures or spatial motifs), then use them as patterns to characterize the proteins under study. An emergent trend consists in parsing proteins three-dimensional (3D) structures into graphs of amino acids. Hence, the search of recurrent spatial motifs is formulated as a process of frequent subgraph discovery where each subgraph represents a spatial motif. In this scope, several efficient approaches for frequent subgraph discovery have been proposed in the literature. However, the set of discovered frequent subgraphs is too large to be efficiently analyzed and explored in any further process. In this paper, we propose a novel pattern selection approach that shrinks the large number of discovered frequent subgraphs by selecting the representative ones. Existing pattern selection approaches do not exploit the domain knowledge. Yet, in our approach we incorporate the evolutionary information of amino acids defined in the substitution matrices in order to select the representative subgraphs. We show the effectiveness of our approach on a number of real datasets. The results issued from our experiments show that our approach is able to considerably decrease the number of motifs while enhancing their interestingness.
\end{abstract}


\category{E.4}{Coding and information theory}[Database Applications - Data Mining]
\terms{Algorithms, Experimentation}

\keywords{Feature selection, graph mining, representative unsubstituted subgraphs, spatial motifs, protein structures}


\section{Introduction}
Studying protein structures can reveal relevant structural and functional information which may not be derived from protein sequences alone. During recent years, various methods that study protein structures have been elaborated based on diverse types of descriptor such as profiles \cite{Niklas_2004}, spatial motifs \cite{Kleywegt_1999,Sun_2012} and others. Yet, the exponential growth of online databases such as the Protein Data Bank (PDB) \cite{Berman_2000}, CATH \cite{Alison_2O11}, SCOP \cite{Andreeva_2008} and others, arises an urgent need for more accurate methods that will help to better understand the studied phenomenons such as protein evolution, functions, etc.

In this scope, proteins have recently been interpreted as graphs of amino acids and studied based on graph theory concepts \cite{Huan_2004}. This representation enables the use of graph mining techniques to study protein structures in a graph perspective. In fact, in graph mining, any problem or object under consideration is represented in the form of nodes and edges and studied based on graph theory concepts \cite{Bartoli_2007,Zaki_2009,Jin_2009,Cheng_2010}. One of the powerful and current trends in graph mining is frequent subgraph discovery. It aims to discover subgraphs that frequently occur in a graph dataset and use them as patterns to describe the data. These patterns are lately analyzed by domain experts to reveal interesting information hidden in the original graphs, such as discovering pathways in metabolic networks \cite{Faust_2010}, identifying residues that play the role of hubs in the protein and stabilize its structure \cite{Vallabhajosyula_2009}, etc.

The graph isomorphism test is one of the main bottlenecks of frequent subgraph mining. Yet, many efficient and scalable algorithms have been proposed in the literature and made it feasible for instance FFSM \cite{Huan_2003}, gSpan \cite{Yan_2002}, GASTON \cite{Nijssen_2004}, etc. Unfortunately, the exponential number of discovered frequent subgraphs is another serious issue that still needs more attention \cite{Woznica_2012}, since it may hinder or even make any further analysis unfeasible due to time, resources, and computational limitations. For example, in an AIDS antiviral screen dataset composed of 422 chemical compounds, there are more than 1 million frequent substructures when the minimum support is 5\%. This problem becomes even more serious with graphs of higher density such as those representing protein structures. In fact, the issues raised from the huge number of frequent subgraphs are mainly due to two factors, namely \textit{redundancy} and \textit{significance} \cite{Marisa_2010}. Redundancy in a frequent subgraph set is caused by structural and/or semantic similarity, since most discovered subgraphs differ slightly in structure and may infer similar or even the same meaning. Moreover, the significance of the discovered frequent subgraphs is only related to frequency. This yields an urgent need for efficient approaches allowing to select relevant patterns among the large set of frequent subgraphs.

In this paper, we propose a novel selection approach which selects a subset of representative patterns from a set of labeled subgraphs, we term them \textit{unsubstituted patterns}. In order to select these unsubstituted patterns and to shrink the large size of the initial set of frequent subgraphs, we exploit a specific domain knowledge, which is the substitution between amino acids represented as nodes. Though, the main contribution of this work is to define a new approach for mining a representative summary of the set of frequent subgraphs by incorporating a specific background domain knowledge which is the ability of substitution between nodes labels in the graph. In this work, we apply the proposed approach on protein structures because of the availability of substitution matrices in the literature, however, it can be considered as general framework for other applications whenever it is possible to define a matrix quantifying the possible substitution between the labels. Our approach can also be used on any type of subgraph structure such as cliques, trees and paths (sequences). In addition, it can be easily coupled with other pattern selection methods such as discrimination or orthogonality based approaches. Moreover, this approach is unsupervised and can help in various mining tasks, unlike other approaches that are supervised and dedicated to a specific task such as classification. 

The remainder of the paper is organized as follows. Section \ref{sec:related_works} discusses the recent related works in the area of pattern selection for subgraphs. In Section \ref{sec:unsub_patt_selection}, we present the background of our work and we define the preliminary concepts as well as the main algorithm of our approach. Then, Section \ref{sec:experiments} describes the characteristics of the used data and the experimental settings. Section \ref{sec:results} presents the obtained experimental results and the discussion. It is worth noting that in the rest of the paper, we use the following terms interchangeably : spatial motifs, patterns, subgraphs.

\section{Related Works}
\label{sec:related_works}
Recently, several approaches have been proposed for pattern selection in subgraph mining. In \cite{Chaoji_2008}, authors proposed ORIGAMI, an approach for both subgraph discovery and selection. First they randomly mine a sample of maximal frequent subgraphs, then straightforwardly they select an $\alpha$-orthogonal (non-redundant), $\beta$-representative subgraphs from the mined set. The LEAP algorithm proposed in \cite{Yan_2008} tries to locate patterns that individually have high discrimination power, using an objective function score that measures each pattern's significance. Another approach termed gPLS  was proposed in \cite{Saigo_2008}. It attempts to select a set of informative subgraphs in order to rapidly build a classifier. gPLS uses the mathematical concept of partial least squares regression to create latent variables allowing a better prediction. COM \cite{Jin_2009} is another subgraph selection approach which follows a process of pattern mining and classifier learning. It mines co-occurrence rules. Then, based on the co-occurrence information it assembles weak features in order to generate strong ones. In \cite{Marisa_2010}, authors proposed a feature selection approach termed CORK. To find frequent subgraphs, it uses the state-of-the-art approach gSpan. Then using a submodular quality function, it selects among them the subset of subgraphs that are most discriminative for classification. In \cite{Fei_2010}, authors designed LPGBCMP, a general model which selects clustered features by considering the structure relationship between subgraph patterns in the functional space. The selected subgraphs are used as weak classifiers (base learners) to obtain high quality classification models. To the best of our knowledge, in all existing subgraph selection approaches \cite{Hasan_2008}, the selection is usually based on structural similarity \cite{Chaoji_2008} and/or statistical measures (e.g. frequency and coverage (closed \cite{Yan_2003}, maximal \cite{Thomas_2010}), discrimination \cite{Marisa_2010}, ...). Yet, the \textit{prior} information and knowledge about the domain are often ignored. However, these prior knowledge may help building dedicated approaches that best fit the studied data.

\section{Mining Representative Unsubstituted Patterns}
\label{sec:unsub_patt_selection}

\subsection{Background}
Statistical pattern selection methods have been widely used to resolve the dimensionality problem when the number of discovered patterns is too large. However, these methods are too generic and do not consider the specificity of the domain and the used data. We believe that in many contexts, it would be important to incorporate the background knowledge about the domain in order to create approaches that best fit the considered data. In proteomics, a protein structure is composed by the folding of a set of amino acids. During evolution, amino acids can substitute each other. The scores of substitution between pairs of amino acids were quantified by biologists in the literature in the form of substitution matrices such as Blosum62 \cite{Eddy_2004}. Our approach uses the substitution information given in the substitution matrices in order to select a subset of unsubstituted patterns that summarizes the whole set of frequent subgraphs. We consider the selected patterns as the representative ones.

The main idea of our approach is based on node substitution. Since the nodes of a protein graph represent amino acids, though, using a substitution matrix, it would be possible to quantify the substitution between two given subgraphs. Starting from this idea, we define a similarity function that measures the distance between a given pair of subgraphs. Then, we preserve only one subgraph from each pair having a similarity score greater or equal to a user specified threshold such that the preserved subgraphs represent the set of representative unsubstituted patterns. An overview of the proposed approach is illustrated in Figure \ref{fig:pattern_mining} and a more detailed description is given in the following sections.

\begin{figure}
 \centering
	\includegraphics[width=0.45\textwidth]{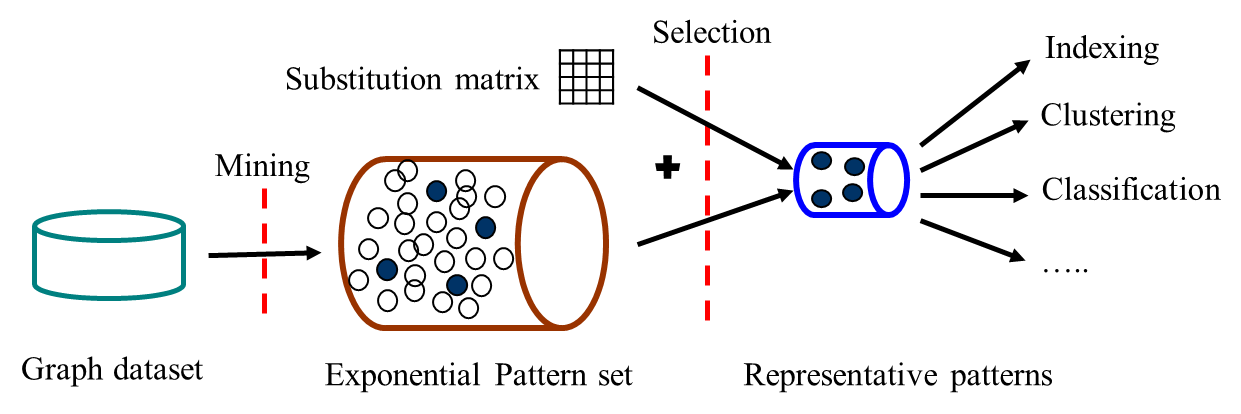}
		\caption{Unsubstituted pattern selection framework.}\label{fig:pattern_mining}
\end{figure}

The substitution between amino acids was also used in the literature but for sequential feature extraction from protein sequences in \cite{Saidi_2010}, where the authors proposed a novel feature extraction approach termed \textit{DDSM} for protein sequence classification. Their approach is restricted to protein sequences and generates every subsequence substituting another one. In other words, DDSM eliminates any pattern substituted by another one and which itself does not substitute any other one. We believe that their approach does not guarantee an optimal summarization since its output may still contain patterns that substitute each other. In addition, they do consider negative substitution scores as impossible substitutions which is biologically not true since negative scores are only expressing the less likely substitutions, which obviously does not mean that they are impossible. Moreover, DDSM is limited to protein sequences and does not handle more complex structures such as the protein tertiary structure. Our approach overcomes these shortcomings, since it can handle both protein sequences and structures (since a sequence can be seen as a line graph). In addition, it consider both the positive and negative scores of the matrix. Moreover, our approach generates a set of representative unsubstituted patterns ensuring an optimal summarization of the initial set. Besides, it is unsupervised and can be exploited in classification as well as other analysis and knowledge discovery contexts unlike DDSM which is dedicated to classification.

\subsection{Preliminaries}\label{subsec:formalization}
In this section we present the fundamental definitions and the formal problem statement. Let $\mathcal{G}$ be a dataset of graphs. Each graph $G=(V,E,L)$ of $\mathcal{G}$ is given as a collection of nodes $V$ and a collection of edges $E$. The nodes of $V$ are labeled within an alphabet $L$. We denote by $|V|$ the number of nodes (also called the graph order) and by $|E|$ the number of edges (also called the graph size). Let also $\Omega$ be the set of frequent subgraphs extracted from $\mathcal{G}$, also referred here as $patterns$.
\begin{definition}\label{def:subst_mat}(Substitution matrix)
Given an alphabet $L$, a substitution matrix $\mathcal{M}$ over $L$ is the function defined as:
\begin{equation}
\begin{array}{l|rcl}
\mathcal{M} : & L^2 & \longrightarrow & [\bot,\top] \subset \mathds{R}\\
& (l, l') & \longmapsto & x
\end{array}
\end{equation}

The higher the value of $x$ is, the likely is the substitution of $l'$ by $l$. If $x=\bot$ then the substitution is impossible, and if $x=\top$ then the substitution is certain. The values $\bot$ and $\top$ are optional and user-specified. They may appear or not in $\mathcal{M}$. The scores in $\mathcal{M}$ should respect the following two properties:
\begin{enumerate}
\item $\forall$ \textit{l} $\in$ \textit{L}, $\exists$ \textit{l'} $\in$ \textit{L} $\mid$ $\mathcal{M}(l,l')$ $\neq$ $\bot$,
\item $\forall$ \textit{l} $\in$ \textit{L}, if $\exists$ \textit{l'} $\in$ \textit{L} $\mid$ $\mathcal{M}(l,l')$ = $\top$ then $\forall$ \textit{l"} $\in$ \textit{L}$\setminus$\{\textit{l}, \textit{l'}\}, $\mathcal{M}(l,l")$ = $\bot$ and $\mathcal{M}(l',l")$ = $\bot$.
\end{enumerate}

In many real world applications, the substitution matrices may contain at the same time positive and negative scores. In the case of protein's substitution matrices, both positive and negative values represents possible substitutions. However, positive scores are given to the more likely substitutions while negative scores are given to the less likely substitutions. Though, in order to give more magnitude to higher values of $x$, $\forall$ $l$ and $l'$ $\in$ $L$:
\begin{equation}
\mathcal{M}(l,l') =  \mathrm{e}^{\mathcal{M}(l,l')}
\end{equation}
\end{definition}

\begin{definition}\label{def:patt_shape}(Structural isomorphism)
Two patterns $P=(V_P,E_P,L)$ and $P'=(V_{P'},E_{P'},L)$ are said to be structurally isomorphic (having the same shape), we note $shape(P,P')$. $shape(P,P')=true$ , iff:
\begin{enumerate}
\item [-] $P$ and $P'$ have the same order, $i.e., |V_P|=|V_{P'}|$,
\item [-] $P$ and $P'$ have the same size, $i.e., |E_P|=|E_{P'}|$,
\item [-] $\exists$ a bijective function $f : V_P \rightarrow V_{P'} \mid \forall u,v \in V_P$ if $(u, v)\in E_P$ then $(f(u), f(v))\in E_{P'}$ and inversely.
\end{enumerate}

It is worth mentioning that in the graph isomorphism problem we test whether two graphs are exactly the same by considering both structures and labels. But in this definition, we only test whether two given graphs are structurally the same, in terms of nodes and edges, without considering the labels. 
\end{definition}

\begin{definition}\label{def:elem_mut_prob}(Elementary mutation probability)
Given a node $v$  of a label $l \in L$, the elementary mutation probability, $M_{el}(v)$, measures the possibility that $v$ stay itself and \textbf{does not} mutate to any other node depending on its label $l$.
\begin{equation}\label{eq:elem_mut_prob}
    M_{el}(v)=\begin{cases}
    0, \textit{ if }  \mathcal{M}(l,l)= \mathrm{e}^{\bot} \\
    1, \textit{ if }  \mathcal{M}(l,l)= \mathrm{e}^{\top} \\
    \frac{\mathcal{M}(l,l)}{\sum_{i=1}^{|L|} \mathcal{M}(l,l_i)}, \textit{otherwise}
\end{cases}
\end{equation}

Obviously, if the substitution score in $\mathcal{M}$ between $l$ and itself is $\bot$ then it is certain that $l$ will mutate to another label $l'$ and the probability value that $v$ does not mutate should be 0. Respectively, if this substitution score is $\top$ then it is certain that $v$ will stay itself and conserve its label $l$ so the probability value must be equal to 1. Otherwise, we divide the score that $l$ mutates to itself by the sum of all the possible mutations.
\end{definition}

\begin{definition}\label{def:patt_mut_prob}(Pattern mutation probability)
Given a pattern $P=(V_P,E_P,L) \in \Omega$, the pattern mutation probability, $M_{patt}(P)$, measures the possibility that $P$ mutates to any other pattern having the same order.
\begin{equation}\label{eq:Pi}
	M_{patt}(P) = 1 - \prod_{i=1}^{\mid V_P\mid} M_{el}(P[i])
\end{equation}

where $\prod_{i=1}^{|V_P|} M_{el}(P[i])$ represents the probability that the pattern $P$ does not mutate to any other pattern $i.e.$ $P$ stays itself.
\end{definition}

\begin{definition}\label{def:elem_subst_prob}(Elementary substitution probability)
Given two nodes $v$ and $v'$ having correspondingly the labels $l, l' \in L$, the elementary substitution probability, $S_{el}(v,v')$, measures the possibility that $v$ substitutes $v'$.
\begin{equation}\label{eq:EP}
	S_{el}(v,v') = \frac{\mathcal{M}(l,l')}{\mathcal{M}(l,l)}
\end{equation}

It is worth noting that $S_{el}$ is not bijective $i.e.$ $S_{el}(v,v')$ $\neq$ $S_{el}(v',v)$.
\end{definition}

\begin{definition}\label{def:patt_subst_prob}(Pattern substitution score)
Given two patterns $P=(V_P,E_P,L)$ and $P'=(V_{P'},E_{P'},L)$ having the same shape, we denote by $S_{patt}(P,P')$ the substitution score of $P'$ by $P$. In other words, it measures the possibility that $P$ mutates to $P'$ by computing the sum of the elementary substitution probabilities then normalize it by the total number of nodes of $P$. Formally:
\begin{equation}\label{eq:PP}
	S_{patt}(P,P') = \frac{\sum_{i=1}^{\mid V_P\mid} S_{el}(P[i],P'[i])}{\mid V_P\mid}
\end{equation}
\end{definition}

\begin{definition}\label{def:patt_subst}(Pattern substitution)
A pattern $P$ substitutes a pattern $P'$, we note $subst(P,P',\tau)=true$, iff:
\begin{enumerate}
\item $P$ and $P'$ are structurally isomorphic ($shape(P,P') = true$),
\item $S_{patt}(P,P') \ge \tau$, where $\tau$ is a user-specified threshold such that $0\% \le \tau \le 100\%$.
\end{enumerate}
\end{definition}

\begin{definition}\label{def:unsubst_patt}(Unsubstituted pattern)
Given a threshold $\tau$ and $\Omega^*$ $\in \Omega$, a pattern $P^* \in \Omega^*$ is said to be unsubstituted iff $\nexists  P \in \Omega^*$ $\mid$ $M_{patt}(P) > M_{patt}(P^*) $ and $subst(P,P^*,\tau)=true$.
\end{definition}

\begin{proposition}[Null $M_{patt}$ case]\label{prop:null_Mpatt}
Given a pattern $P = (V_p,E_p,L) \in \Omega$, if $M_{patt}(P) = 0$ then $P$ is an unsubstituted pattern. 
\end{proposition}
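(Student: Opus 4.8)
The plan is to unfold the hypothesis $M_{patt}(P)=0$ through Definition~\ref{def:patt_mut_prob}, reduce it to a statement about every single node of $P$, and then show via Definition~\ref{def:unsubst_patt} that such a pattern cannot be the target of a substitution issued by a strictly more mutable pattern.

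First I would rewrite $M_{patt}(P)=0$ using Equation~\eqref{eq:Pi}, obtaining $\prod_{i=1}^{|V_P|} M_{el}(P[i]) = 1$. By Definition~\ref{def:elem_mut_prob} each factor $M_{el}(P[i])$ lies in $[0,1]$ (the ``otherwise'' fraction is bounded by $1$ because its denominator contains the numerator as a summand together with non-negative terms), so a product of such factors can equal $1$ only if every factor equals $1$; hence $M_{el}(P[i])=1$ for all $i$. Reading Definition~\ref{def:elem_mut_prob} backwards, and writing $l_i$ for the label of $P[i]$, this forces either $\mathcal{M}(l_i,l_i)=\mathrm{e}^{\top}$ or $\sum_{j}\mathcal{M}(l_i,l_j)=\mathcal{M}(l_i,l_i)$, i.e. every off-diagonal entry $\mathcal{M}(l_i,l_j)$ with $l_j\neq l_i$ collapses to the impossible value. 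In both cases, invoking property~2 of Definition~\ref{def:subst_mat}, the label $l_i$ is locked: node $P[i]$ admits no feasible substitution by a different label.

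Next I would argue by contradiction. Suppose $P$ is not unsubstituted; then by Definition~\ref{def:unsubst_patt} there is $P'\in\Omega^*$ with $M_{patt}(P')>M_{patt}(P)=0$ and $subst(P',P,\tau)=true$. The strict inequality $M_{patt}(P')>0$ means, by the first step applied to $P'$, that at least one node of $P'$ carries a mutable label, so $P'$ differs from $P$ in at least one aligned position. On the other hand $subst(P',P,\tau)=true$ requires, by Definition~\ref{def:patt_subst}, that $shape(P',P)=true$ and $S_{patt}(P',P)\ge\tau$, where $S_{patt}(P',P)$ is assembled from the elementary scores $S_{el}(P'[k],P[k])$ of Definitions~\ref{def:elem_subst_prob}--\ref{def:patt_subst_prob}. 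Since every node of $P$ is locked, substituting any node $P[k]$ by a foreign label is infeasible, and I would use this to push the accumulated score $S_{patt}(P',P)$ strictly below any admissible $\tau$ for every $P'\neq P$, contradicting $subst(P',P,\tau)=true$. Thus the only admissible witness is $P'=P$, for which $M_{patt}(P')=M_{patt}(P)$ and the strict inequality fails; no witness exists, and $P$ is unsubstituted.

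The delicate point is the bridge in the third paragraph. The hypothesis constrains the row of $\mathcal{M}$ indexed by each $l_i$ (how $l_i$ mutates \emph{away}), whereas $S_{patt}(P',P)$ reads the entries $\mathcal{M}(l'_k,l_k)$ governing how a foreign label substitutes \emph{into} $l_k$. Turning ``locked node'' rigorously into ``cannot be the image of a substitution'' is where I expect the real work to sit, and I would close it by leaning on property~2 of Definition~\ref{def:subst_mat} together with the $\mathrm{e}^{\top}/\mathrm{e}^{\bot}$ semantics of the matrix (and, if needed, the symmetry of the underlying biological substitution scores), so that a single infeasible node is enough to keep $S_{patt}(P',P)$ under the threshold whenever $P'\neq P$.
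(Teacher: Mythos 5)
Your opening reduction is correct, and it is already more explicit than the paper's own proof, which consists of the single sentence that the claim ``can be simply deduced from Definitions~\ref{def:elem_mut_prob} and~\ref{def:patt_mut_prob}'': since each $M_{el}(P[i])\in[0,1]$, the product in Equation~\eqref{eq:Pi} equals $1$ only if every factor does, so every label of $P$ has its row of $\mathcal{M}$ concentrated on the diagonal. The genuine gap is exactly the bridge you flag in your last paragraph, and the closure you propose for it fails. $S_{patt}(P',P)$ is the \emph{arithmetic mean} of the elementary scores $S_{el}(P'[k],P[k])$, and at every aligned position where $P'$ and $P$ carry the same label that score is $\mathcal{M}(l,l)/\mathcal{M}(l,l)=1$. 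So a single ``infeasible'' position does not drag the mean below $\tau$: a pattern $P'$ agreeing with $P$ everywhere except at one node, where it carries an ordinary mutable label, has $M_{patt}(P')>0$ and $S_{patt}(P',P)\ge(|V_P|-1)/|V_P|$, which clears any threshold $\tau\le 1-1/|V_P|$ (in particular the paper's $\tau=30\%$ whenever $|V_P|\ge 2$; and for $\tau=0\%$, which Definition~\ref{def:patt_subst} admits, no score can ever be pushed strictly below the threshold). The contradiction you aim for therefore does not materialize; such a $P'$ is in fact a direct witness against the unsubstitutability of $P$ under Definition~\ref{def:unsubst_patt}.

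There is a second, independent obstruction that you half-identify but do not resolve: the hypothesis constrains the rows $\mathcal{M}(l_i,\cdot)$ of the labels of $P$ (how they mutate away), while $S_{el}(P'[k],P[k])=\mathcal{M}(l'_k,l_k)/\mathcal{M}(l'_k,l'_k)$ reads the corresponding columns, and nothing in Definition~\ref{def:subst_mat} forces $\mathcal{M}$ to be symmetric, so even the one differing position need not contribute $0$. What Definitions~\ref{def:elem_mut_prob} and~\ref{def:patt_mut_prob} actually yield is the converse-flavored fact that a pattern with $M_{patt}=0$ can never act as the \emph{substituter} in Definition~\ref{def:unsubst_patt} (it cannot have strictly larger $M_{patt}$ than anything), not that it cannot be substituted. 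In short, your attempt is more careful than the paper's one-line proof, but the strategy of deriving a contradiction from ``locked nodes'' cannot be completed from the stated definitions alone; the hole you expose lies in the proposition as stated, and closing it would require either $\tau$ very close to $100\%$ or additional hypotheses on $\mathcal{M}$.
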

\begin{proof}
The proof can be simply deduced from Definitions \ref{def:elem_mut_prob} and \ref{def:patt_mut_prob}.
\end{proof}

\begin{definition}\label{def:merge_supp}(Merge support)
Given two patterns $P$ and $P'$, if $P$ substitutes $P'$ then $P$ will represent $P'$ in the list of graphs where $P'$ occurs. Formally:
\begin{equation}\label{eq:PP}
	\forall (P, P') \mid subst(P,P',\tau)=true\textit{ }then\textit{ }D_P = D_P \cup D_{P'}
\end{equation}

where $D_P$ and $D_{P'}$ are correspondingly the occurrence set of $P$ and that of $P'$.
\end{definition}

\subsection{Algorithm}\label{subsec:algorithm}
Given a set of patterns $\Omega$ and a substitution matrix $\mathcal{M}$, we propose \textsc{UnSubPatt}(see Algorithm \ref{alg:mm}), a pattern selection algorithm which enables detecting the set of unsubstituted patterns $\Omega^*$ within $\Omega$. Based on our similarity concept, all the patterns in $\Omega^*$ are dissimilar, since it does not contain any pair of patterns that are substitutable. This represents a reliable summarization of $\Omega$.

\begin{algorithm}
\label{alg:mm}
\caption{\textsc{UnSubPatt}}
\KwData{$\Omega$, $\mathcal{M}$, $\tau$, ($\bot$,$\top$) [Optional]}
\KwResult{$\Omega^*$: \{unsubstituted patterns\}}
\Begin{
$\Omega \leftarrow \{\Omega^k \leftarrow \{P \in \Omega \textit{ } \mid \textit{ } \forall (P', P") \in \Omega^k , |V_{P'}|=|V_{P"}|$ and $|E_{P'}|=|E_{P"}|\}\}$\;
\ForEach {$\Omega^k \in \Omega$}{
$\Omega^k \leftarrow sort(\Omega^k \textit{ } by \textit{ } M_{patt})$\;
\ForEach {$P \in \Omega^k$}{
\If{$M_{patt}(P)>0$}{
\ForEach {$P' \in \Omega^k \backslash P$ $|$ $M_{patt}(P') < M_{patt}(P)$}{
\If{$M_{patt}(P')>0$}{
\If{$shape(P,P')$ and $subst(P,P',\tau)$}{
$merge\_support(P, P')$\;
$remove(P', \Omega^k)$\;
}}}}}
$\Omega^* \leftarrow \Omega^* \cup \Omega^k$\;
}}
\end{algorithm}
The general process of the algorithm is described as follows: first, $\Omega$ is divided into subsets of patterns having the same number of nodes and edges. Then, each subset is sorted in a descending order by the pattern mutation probability $M_{patt}$. Each subset is browsed starting from the pattern having the highest $M_{patt}$. For each pattern, we remove all the patterns it substitutes and we merge their supports such that the preserved pattern will represent all the removed ones wherever they occurs. The remaining patterns represent the unsubstituted pattern set. Though, $\Omega^*$ can not be summarized by a subset of it but itself. Our algorithm uses Proposition \ref{prop:null_Mpatt} to avoid unnecessary computation related to patterns with $M_{patt} = 0$. They are directly considered as unsubstituted patterns, since they can not mutate to any other pattern.

\begin{theorem}
Let $\Omega$ be a set of patterns and $\Omega^*$ its subset of unsubstituted patterns based on a substitution matrix
 $\mathcal{M}$ and a threshold $\tau$, $i.e.$, \textsc{UnSubPatt} $(\Omega, \mathcal{M}, \tau, (\bot,\top))=\Omega^*$. Then : 
\begin{equation}
\textsc{UnSubPatt}(\Omega^*, \mathcal{M}, \tau, (\bot,\top) ) = \Omega^*
\end{equation}
\end{theorem}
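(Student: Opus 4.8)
The plan is to prove idempotence of \textsc{UnSubPatt}: a second run on its own output deletes nothing. First I would record three structural facts that reduce the theorem to a single removal-freeness claim. (i) \textsc{UnSubPatt} never \emph{creates} a pattern; the only operation that changes the pattern set is the call $remove(P',\Omega^k)$, so every pattern present at any instant of the second run is a member of the input $\Omega^*$. (ii) The initial partitioning into groups $\Omega^k$ and the $sort$ by $M_{patt}$ are deterministic reorganizations that leave set membership unchanged. (iii) The only remaining side effect, $merge\_support$ (Definition~\ref{def:merge_supp}), modifies the occurrence sets $D_P$ only; but none of the decision quantities $M_{patt}$, $S_{patt}$, $shape$, $subst$ depends on $D_P$ --- each is a function of nodes, edges and labels alone --- so support merging cannot change which comparisons succeed. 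Consequently it suffices to show that no $remove$ is ever executed during $\textsc{UnSubPatt}(\Omega^*,\mathcal{M},\tau,(\bot,\top))$.

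Next I would connect the algorithm's group-wise test to the global notion of substitution. Because $subst(P,P',\tau)$ requires $shape(P,P')$ (Definition~\ref{def:patt_subst}), and $shape$ forces equal order and size (Definition~\ref{def:patt_shape}), any substituting pair necessarily lies inside one group $\Omega^k$. Hence restricting the search to within groups discards no candidate pair, and the inner condition --- a pattern $P$ still present with $M_{patt}(P)>M_{patt}(P')>0$, $shape(P,P')$ and $subst(P,P',\tau)=true$ --- is exactly what Definition~\ref{def:unsubst_patt} forbids, the only difference being that the definition quantifies over all of $\Omega^*$.

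Then I would argue by contradiction. Suppose the second run does remove some pattern, and consider the step executing $remove(P',\Omega^k)$. By the guards preceding it, there is a pattern $P$ present in $\Omega^k$ at that instant with $M_{patt}(P)>M_{patt}(P')$, $M_{patt}(P')>0$, and $subst(P,P',\tau)=true$. By fact (i), $P\in\Omega^*$. But then $P$ witnesses, against Definition~\ref{def:unsubst_patt}, that $P'$ is substituted by a strictly-higher-$M_{patt}$ pattern of $\Omega^*$, so $P'$ is not unsubstituted --- contradicting $P'\in\Omega^*$ (the hypothesis that $\Omega^*$ is precisely the set of unsubstituted patterns). Therefore no removal occurs; combined with facts (i)--(iii), the second run returns $\Omega^*$ unchanged. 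Patterns with $M_{patt}=0$ need no separate treatment, since the outer guard $M_{patt}(P)>0$ already excludes them from removal in either run, in agreement with Proposition~\ref{prop:null_Mpatt}.

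Finally, the main obstacle --- and where I would concentrate the care --- is reconciling the \emph{dynamic} deletion process with the \emph{static} definition of unsubstitutedness: a priori, the witness $P$ that triggers a deletion might itself have been deleted earlier in the run, which would undercut the appeal to Definition~\ref{def:unsubst_patt}. This is exactly why I frame the contradiction around a concrete removal event and lean on fact (i): because deletions only shrink the working set and never expel a pattern from the ambient universe $\Omega^*$, the triggering $P$ is guaranteed to be an element of $\Omega^*$, which is all the definition requires. Making this ``removals preserve membership in $\Omega^*$'' observation explicit is the crux that turns the otherwise one-line argument into a rigorous proof.
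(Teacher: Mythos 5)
Your proof is correct and follows the same route as the paper's: both reduce the claim to the observation that, by Definition~\ref{def:unsubst_patt}, no pattern of $\Omega^*$ is substituted by another pattern of $\Omega^*$ with strictly larger $M_{patt}$, so the second run can delete nothing. The paper states this in one line; your version merely makes explicit the supporting algorithmic facts (removal-only dynamics, irrelevance of $merge\_support$, group restriction losing no candidate pairs) that the paper leaves implicit.
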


\begin{proof}
The proof can be deduced simply from Definition \ref{def:unsubst_patt}. Given a threshold $\tau$, $\Omega^*$ can not be summarized by its subsets unless itself. Formally:
\begin{equation}
\forall P\in\Omega^*, \nexists P'\in\Omega^*|M_{patt}(P)>M_{patt}(P')\textit{ }and\textit{ }subst(P,P',\tau)
\end{equation}
\end{proof}

\subsection{Complexity}
Suppose $\Omega$ contains $n$ patterns. $\Omega$ is divided into $g$ groups, each containing patterns of order $k$. This is done in $O(n)$. Each group $\Omega^k$ is sorted in $O(|\Omega^k| * log |\Omega^k|)$. Searching for unsubstituted patterns requires browsing $\Omega^k$ $(O(|\Omega^k|))$ and for each pattern, browsing in the worst case all remaining patterns $(O(|\Omega^k|))$ to check the shape $(O(k))$ and the substitution $(O(k))$. This means that searching for unsubstituted patterns in a group $\Omega^k$ can be done in $O(|\Omega^k|^2 * k^2)$. Hence, in the worst case, the complexity of our algorithm is $O(g * m_{max}^2 * k_{max}^2)$, where $k_{max}$ is the maximum pattern order and $m_{max}$ is the number of patterns of the largest group $\Omega^k$.

\section{Experiments}\label{sec:experiments}

\subsection{Datasets} \label{sec:dataset}
In order to experimentally evaluate our approach, we use four graph datasets of protein structures, which also have been used in \cite{Yan_2008} then \cite{Fei_2010}. Each dataset consists of two classes: positive and negative. Positive samples are proteins selected from a considered protein family whereas negative samples are proteins randomly gathered from the Protein Data Bank \cite{Berman_2000}. Each protein is parsed into a graph of amino acids. Each node represents an amino acid residue and is labeled with its amino acid type. Two nodes $u$ and $v$ are linked by an edge $e(u, v) = 1$ if the euclidean distance between their two $C_\alpha$ atoms $\Delta(C_\alpha(u), C_\alpha(v))$  is below a threshold distance $\delta$. Formally:
\begin{equation}
e(u,v)=
\begin{cases}1, \emph{  }if\emph{  } \Delta(C_\alpha(u), C_\alpha(v))\leq \delta
\\
0, \emph{  }otherwise \end{cases}
\end{equation}

In the literature, many methods use this definition with usually $\delta \geq 7 $\AA\textit{ }on the argument that $C_\alpha$ atoms define the overall shape of the protein conformation \cite{Huan_2005}. In our experiments, we use $\delta = 7 $\AA.

Table \ref{tab:datasets} summarizes the characteristics of each dataset. SCOP ID, Avg.$\mid$V$\mid$, Avg.$\mid$E$\mid$, Max.$\mid$V$\mid$ and Max.$\mid$E$\mid$ correspond respectively to the id of the protein family in SCOP database \cite{Andreeva_2008}, the average number of nodes, the average number of edges, the maximal number of nodes and the maximal number of edges in each dataset.

\begin{table*}[!htpd]
	\centering
	\caption{Experimental data\label{tab:datasets}}
    \begin{tabular}{|c|c|c|c|c|c|c|c|c|}\hline
\textbf{Dataset}  & \textbf{SCOP ID} & \textbf{Family name} & \textbf{Pos.} & \textbf{Neg.} & \textbf{Avg.$\mid$V$\mid$}& \textbf{Avg.$\mid$E$\mid$} & \textbf{Max.$\mid$V$\mid$}& \textbf{Max.$\mid$E$\mid$}\\ \hline
DS1	& 52592	& G proteins & 33	& 33 &  246	&	971	&	897	&	3 544	\\ \hline
DS2	& 48942 & C1 set domains & 38	& 38 &  238	&	928	&	768	&	2 962	\\ \hline
DS3	& 56437 & C-type lectin domains & 38	& 38 & 185	&	719	&	755	&	3 016	\\ \hline
DS4	& 88854	& Kinases, catalytic subunit & 41	& 41 &  275	&	1077	&	775	&	3 016	\\ \hline
\end{tabular}

\end{table*}

\subsection{Protocol and Settings}
Generally, in a pattern selection approach two aspects are emphasized, namely the number of selected patterns and their interestingness. In order to evaluate our approach, we first use the state-of-the-art method of frequent subgraph discovery gSpan \cite{Yan_2002} to find the frequent subgraphs in each dataset with a minimum frequency threshold of 30\%. Then, we use \textsc{UnSubPatt} to select the unsubstituted patterns among them with a minimum substitution threshold $\tau$=30\%. For our approach, we use $Blosum62$ \cite{Eddy_2004} as the substitution matrix since it turns out that it performs well on detecting the majority of weak protein similarities, and it is used as the default matrix by most biological applications such as BLAST \cite{Altschul_1990}. It is worth mentioning that the choice of 30\% as minimum frequency threshold for the frequent subgraph extraction is to make the experimental evaluation feasible due to time and computational limitations. 

In order to evaluate the number of selected subgraphs, we define the selection rate as the rate of the number of unsubstituted subgraphs from the initial set of frequent subgraphs. Formally :
\begin{equation}
Selection \textit{ } rate = \frac{|\Omega^*| \ast 100}{|\Omega|}
\end{equation}

To evaluate the interestingness of the set of selected patterns, we use them as features for classification. We perform a 5-fold cross-validation classification (5 runs) on each protein-structure dataset. We encode each protein into a binary vector, denoting by "1" or "0" the presence or the absence of the feature in the considered protein. To judge the interestingness of the selected subgraphs, we use one of the most known classifier, namely the na\"ive bayes (NB) classifier, due to its simplicity and fast prediction and that its classification technique is based on a global and conditional evaluation of the input features. NB is used with the default parameters from the workbench Weka \cite{Witten_2005}.

\section{Results and Discussion}\label{sec:results}
In this section, we conduct experiments to examine the effectiveness and efficiency of \textsc{UnSubPatt} in finding the representative unsubstituted subgraphs. We test the effect of changing the substitution matrix and the substitution threshold on the results. Moreover, we study the size-based distribution of patterns and we compare the results of our approach with those of other subgraph selection methods from the literature.


\subsection{Empirical Results}
Here, we show the results of our experiments obtained in terms of number of motifs and classification results. As mentioned before, we use gSpan to extract the frequent subgraphs from each dataset with frequency $\geq$ 30\%. Then, we use \textsc{UnSubPatt} to select the unsubstituted patterns among them with a substitution threshold $\tau$=30\% and using Blosum62 as substitution matrix. At last, we perform a 5-fold cross-validation classification (5 runs) to evaluate the interestingness of each subset using the NB classifier. The obtained average results are reported in Table \ref{tab:results}.

\begin{table}[!t]
	\centering
	\caption{Number of frequent subgraphs ($\Omega$), representative unsubstituted subgraphs ($\Omega^*$) and the selection rate\label{tab:number_of_motifs}}
\begin{tabular}{|c|c|c|c|}\hline	 
\textbf{Dataset} & $\mid\Omega\mid$ & $\mid\Omega^*\mid$ & \textbf{Selection rate (\%)} \\  \hline
DS1 & 799094 & 7291 & 0.91 \\  \hline
DS2 & 258371 & 15898 & 6.15 \\  \hline
DS3 & 114792 & 14713 & 12.82 \\  \hline
DS4 & 1073393 & 9958 & 0.93 \\  \hline
\end{tabular}
\end{table}

The high number of discovered frequent subgraphs is due to their combinatorial nature (this was discussed in the introductory section). The results reported in Table \ref{tab:number_of_motifs} show that our approach decreases considerably the number of subgraphs. The selection rate shows that the number of unsubstituted patterns $\mid\Omega^*\mid$ does not exceed 13\% of the initial set of frequent subgraphs $\mid\Omega\mid$ with DS3 and even reaches less than 1\%  with DS1 and DS4. This proves that exploiting the domain knowledge, which in this case consists in the substitution matrix, enables emphasizing information that can possibly be ignored when using exiting subgraph selection approaches. 


\begin{table*}[!t]
	\centering
	\caption{Accuracy, precision, recall (sensitivity), F-score and AUC of the classification of each dataset using NB coupled with frequent subgraphs (FSg) then representative unsubstituted subgraphs (UnSubPatt)\label{tab:results}}

\begin{tabular}{|c|c|c|c|c|c|c|c|c|c|c|}\hline
\multirow{2}{*}{\textbf{Dataset}} & \multicolumn{2}{|c}{\textbf{Accuracy}} & \multicolumn{2}{|c}{\textbf{Precision}} & \multicolumn{2}{|c}{\textbf{Recall}} & \multicolumn{2}{|c}{\textbf{F-score}} & \multicolumn{2}{|c|}{\textbf{AUC}} \\ 
\cline{2-11}
 & \textbf{FSg} & \textbf{UnSubPatt} & \textbf{FSg} & \textbf{UnSubPatt} & \textbf{FSg} & \textbf{UnSubPatt} & \textbf{FSg} & \textbf{UnSubPatt} & \textbf{FSg} & \textbf{UnSubPatt} \\ \hline
DS1 & 0.62 & 0.78 & 0.61 & 0.69 & 0.70 & 0.90 & 0.64 & 0.78 & 0.64 & 0.78 \\ \hline
DS2 & 0.80 & 0.90 & 0.86 & 0.94 & 0.74 & 0.86 & 0.79 & 0.89 & 0.79 & 0.89 \\ \hline
DS3 & 0.86 & 0.94 & 0.89 & 1.00 & 0.86 & 0.89 & 0.86 & 0.94 & 0.86 & 0.94 \\ \hline
DS4 & 0.79 & 0.98 & 0.86 & 0.92 & 0.70 & 0.98 & 0.76 & 0.94 & 0.76 & 0.94 \\ \hline
\end{tabular}
\end{table*}

The classification results reported in Table \ref{tab:results} help to evaluate the interestingness of the selected patterns. Indeed, this will demonstrate if the unsubstituted patterns were arbitrarily selected or they are really representative. Table \ref{tab:results} shows that the classification accuracy significantly increases with all datasets. We notice a huge leap in accuracy especially with DS1 and DS4 with a gain of more than 17\% and reaching almost full accuracy with DS4. To better understand the accuracy results, we also reports the average precision, recall, F-measure and AUC values for all cases. We notice an enhancement of performance with all the mentioned quality metrics. This supports the reliability of our selection approach.
\subsection{Results Using Other Substitution Matrices}
Besides Blosum62, biologists also defined other substitution matrices describing the likelihood that two amino acid types would mutate to each other in evolutionary time. We want to study the effect of using other substitution matrices on the experimental results. Though, we perform the same experiments following the same protocol and settings but using two other substitution matrices, namely $Blosum80$ and $Pam250$. We compare the obtained results in terms of number of subgraphs and classification accuracy with those obtained using the whole set of frequent subgraphs and those using subgraphs selected by \textsc{UnSubPatt} with Blosum62. The results are reported in Table \ref{tab:results_matrices}. A high selection rate accompanied with a clear enhancement of the classification accuracy is noticed using \textsc{UnSubPatt} with all the substitution matrices compared to the results using the whole set of frequent subgraphs. It is clearly noticed that even using different substitution matrices, \textsc{UnSubPatt} shows relatively similar behavior and is able to select a small yet relevant subset of patterns. It is also worth mentioning that for all the datasets, the best classification accuracy is obtained using Blosum62 and the best selection rate is achieved using Pam250. This is simply due to how distant proteins within the same dataset are, since each substitution matrix was constructed to implicitly express a particular theory of evolution. Though, choosing the appropriate substitution matrix can influence the outcome of the analysis.

\begin{table*}[!t]
	\centering
	\caption{Number of subgraphs (\#SG) and accuracy (ACC) of the classification of each dataset using NB coupled with frequent subgraphs (FSg) then representative unsubstituted subgraphs using Blosum80 (UnSubPatt\_Blosum80) and Pam250 (UnSubPatt\_Pam250)\label{tab:results_matrices}}

\begin{tabular}{|c|c|c|c|c|c|c|c|c|}\hline
\multirow{2}{*}{\textbf{Dataset}} & \multicolumn{2}{|c}{\textbf{FSg}} & \multicolumn{2}{|c}{\textbf{UnSubPatt\_Blosum62}} & \multicolumn{2}{|c}{\textbf{UnSubPatt\_Blosum80}} & \multicolumn{2}{|c|}{\textbf{UnSubPatt\_Pam250}} \\
\cline{2-9}
 & \textbf{\#SG} & \textbf{Accuracy} & \textbf{\#SG} & \textbf{Accuracy} & \textbf{\#SG} & \textbf{Accuracy} & \textbf{\#SG} & \textbf{Accuracy}\\ \hline

DS1 & 799094 & 0.62 & 7291 & 0.78 & 7328 & 0.67 & 6137 & 0.68 \\ \hline
DS2 & 258371 & 0.80 & 15898 & 0.90 & 15930 & 0.87 & 15293 & 0.87\\ \hline
DS3 & 114793 & 0.86 & 14713 & 0.94 & 14792 & 0.91 & 14363 & 0.93\\ \hline
DS4 & 1073393 & 0.79 & 9958 & 0.98 & 10417 & 0.90 & 9148 & 0.90\\ \hline
\end{tabular}
\end{table*}

\subsection{Impact of Substitution Threshold}
In our experiments, we used a substitution threshold (of 30\%) to select the unsubstituted patterns from the set of discovered frequent subgraphs. In this section, we study the impact of variation of the substitution threshold on both the number of selected subgraphs and the classification results. To do so, we perform the same experiments while varying the substitution threshold from 0\% to 90\% with a step-size of 10. In order to check if the enhancements of the obtained results are due to our selected features or to the classifier, we use two other well-known classifiers namely the support vector machine (SVM) and decision tree (C4.5) besides the na\"ive bayes (NB) classifier. We use the same protocol and settings of the previous experiments.
Figure \ref{selection_rate} presents the selection rate for different substitution thresholds and Figures \ref{fig:NB}, \ref{fig:SVM} and \ref{fig:C4.5} illustrate the classification accuracy obtained respectively using NB, SVM and C4.5 with each dataset. The classification accuracy of the initial set of frequent subgraphs (gSpan, the line in red) is considered as a standard value for comparison. Thus, the accuracy values of \textsc{UnSubPatt} (in blue) that are above the line of the standard value are considered as gains, and those under the standard value are considered as losses.

In Figure \ref{selection_rate}, we notice that \textsc{UnSubPatt} reduces considerably the number of subgraphs especially with lower substitution thresholds. In fact, the number of unsubstituted patterns does not exceed 30\% for all substitution thresholds below 70\% and even reaches less then 1\% in some cases. This important reduction in the number of patterns comes with a notable enhancement of the classification accuracies. This fact is illustrated in Figures \ref{fig:NB}, \ref{fig:SVM} and \ref{fig:C4.5} which show that the unsubstituted patterns allow better classification performance compared to the original set of frequent subgraphs. \textsc{UnSubPatt} scores very well with the three used classifiers and even reaches full accuracy in some cases. This confirms our assumptions and shows that our selection is reliable and contributes to the enhancement of the accuracy. However, we believe that NB allows the most reliable evaluation because it performs a classification based on a global and conditional evaluation of features, unlike SVM which performs itself another attribute selection to select the support vectors and unlike C4.5 which performs an attribute by attribute evaluation.

In the case of proteins, a substitution threshold of 0\% enables selecting subgraphs based only on their structure. Precisely, \textsc{UnSubPatt} will select only one pattern from each group of subgraphs that are structurally isomorphic. Based on the experimental results, we believe that using these patterns is enough for a structural classification task since it allows a fast selection, selects a very small number of subgraphs and performs very well on classification.

\begin{figure}
	\centering
	\includegraphics[width=0.45\textwidth]{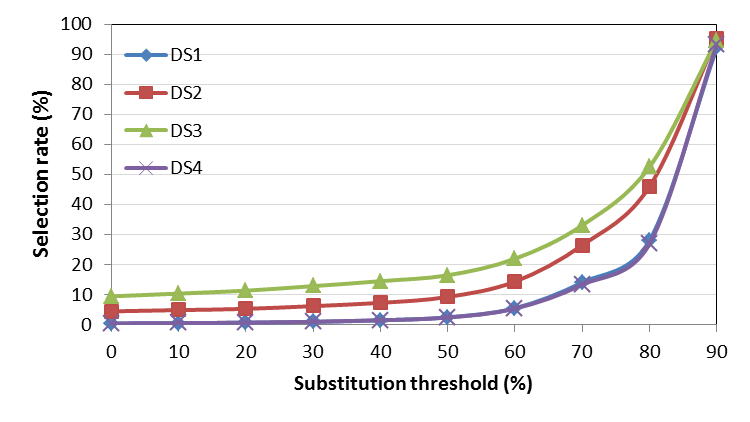}
	   \caption{Rate of unsubstituted patterns from $\Omega$ depending on the substitution threshold ($\tau$).}
	   \label{selection_rate}
\end{figure}

\begin{figure}
	\centering
	\includegraphics[width=0.45\textwidth]{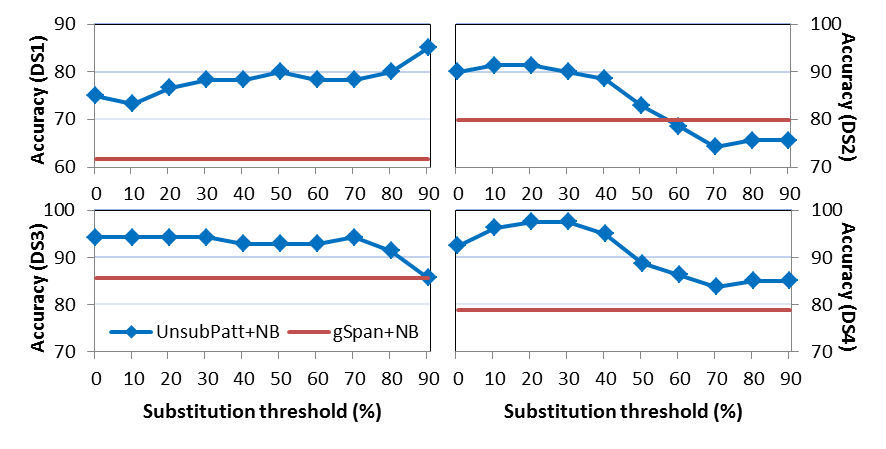}
		\caption{Classification accuracy by NB.}\label{fig:NB}
\end{figure}

\begin{figure}
	\centering
	\includegraphics[width=0.45\textwidth]{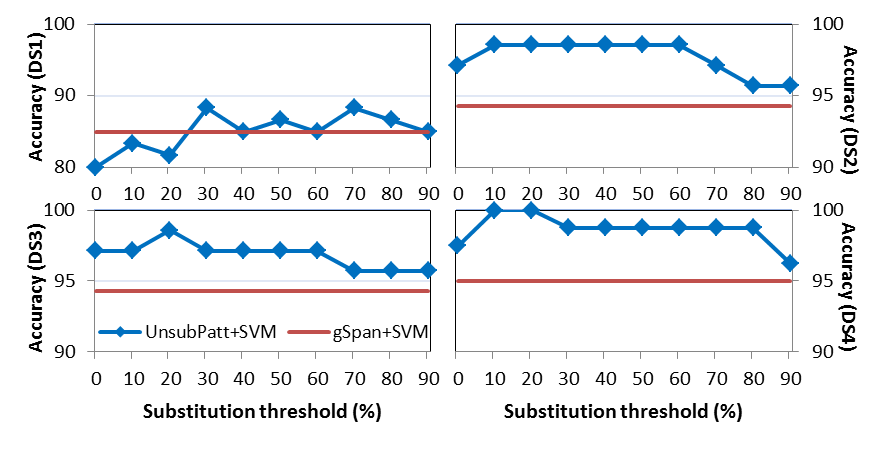}
	   \caption{Classification accuracy by SVM.}\label{fig:SVM}
\end{figure}

\begin{figure}
	\centering
	\includegraphics[width=0.45\textwidth]{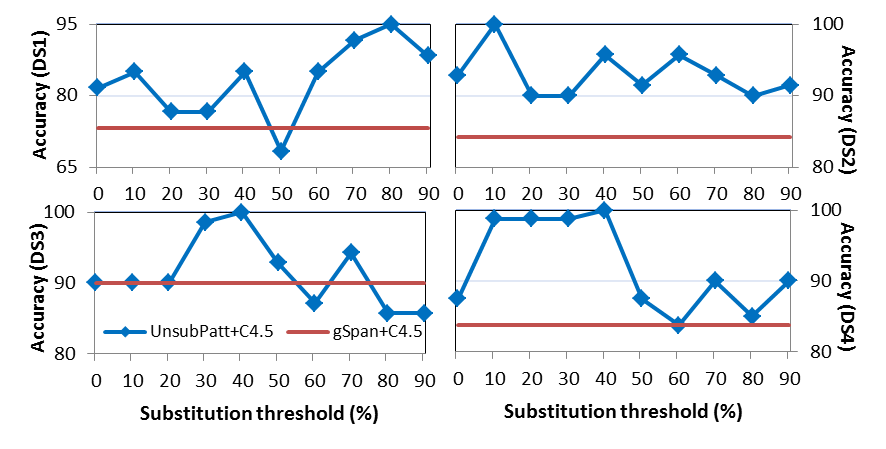}
	  \caption{Classification accuracy by C4.5.} \label{fig:C4.5}
\end{figure}

\subsection{Size-based Distribution of Patterns}
In this section, we study the distribution of patterns based on their size (number of edges). We try to check which sizes of patterns are more concerned by the selection.  The Figures \ref{fig:pattern_distribution_ds1} and \ref{fig:pattern_distribution_ds2} draw the distribution of patterns for the original set of frequent subgraphs and for the final set of representative unsubstituted subgraphs with all the substitution thresholds using Blosum62. The downward tendency of \textsc{UnSubPatt} using lower substitution thresholds and with respect to the original set of frequent subgraph is very clear. In fact, \textsc{UnSubPatt} leans towards cutting off the peaks and flattening the curves with lower substitution thresholds. Another interesting observation is that the curves are flattened in the regions of small patterns as well as in those of big and dense patterns. This demonstrates the effectiveness of \textsc{UnSubPatt} with both small and big patterns. 

\begin{figure}
\centering
	\includegraphics[width=0.45\textwidth]{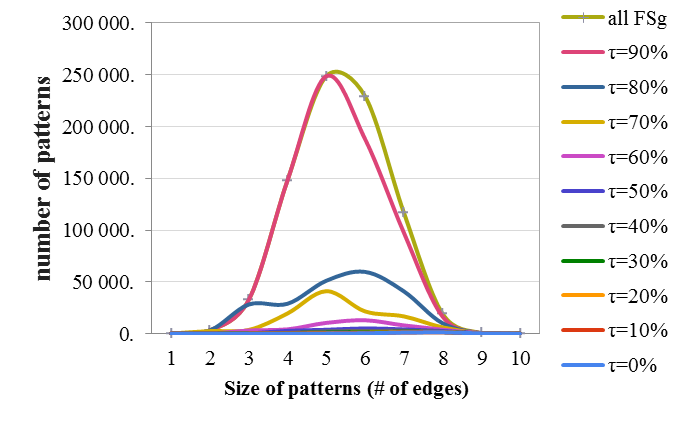}
	  \caption{Distribution of patterns of DS1 for all the frequent subgraphs and for the representative unsubstituted ones with all the substitution thresholds} \label{fig:pattern_distribution_ds1}
\end{figure}

\begin{figure}
\centering
	\includegraphics[width=0.45\textwidth]{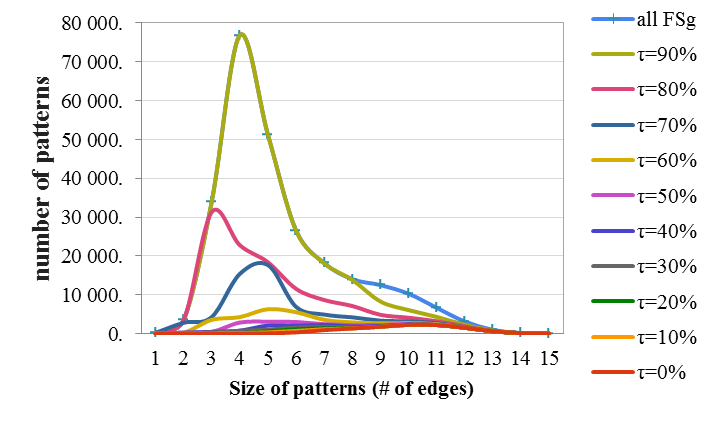}
	  \caption{Distribution of patterns of DS2 for all the frequent subgraphs and for the representative unsubstituted ones with all the substitution thresholds} \label{fig:pattern_distribution_ds2}
\end{figure}

\subsection{Comparison with other approaches}

To objectively evaluate our approach, we compare it with current trends in subgraph selection. In Figure \ref{fig:accuracy_comparison}, we report the classification accuracy using the representative unsubstituted patterns of \textsc{UnSubPatt} besides those using patterns of other new subgraph selection approaches from the literature namely LEAP\cite{Yan_2008}, gPLS\cite{Saigo_2008}, COM\cite{Jin_2009} and LPGBCMP\cite{Fei_2010} (reported and explained in the introductory section).

For \textsc{UnSubPatt}, we report the results obtained using the substitution matrix Blosum62, a minimum substitution threshold $\tau = 30\%$ and SVM for classification. For LEAP+SVM, LEAP is used iteratively to discover a set of discriminative subgraphs with a leap length=0.1. The discovered subgraphs are consider as features to train SVM with a 5-fold cross validation. COM is used with $t_p=30\%$ and $t_n=0\%$. For gPLS, the frequency threshold is set to 30\% and the best accuracies are reported for all the datasets among all the parameters combinations from m = {2, 4, 8, 16} and k = {2, 4, 8, 16}, where m is the number of iterations and k is the number of patterns obtained per search. For LPGBCMP, threshold values of $max_var=1$ and $\delta=0.25$ were respectively used for feature consistency map building and for overlapping. The obtained results are reported in the Figure \ref{fig:accuracy_comparison}.

\begin{figure}
\centering
	\includegraphics[width=0.45\textwidth]{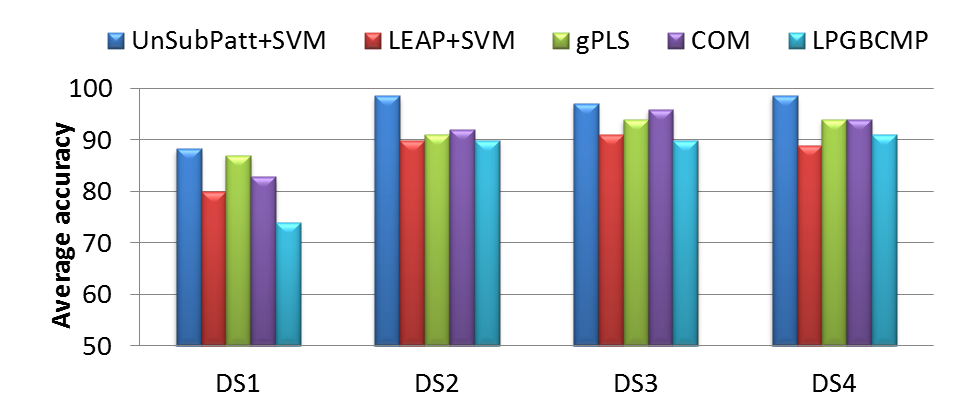}
	  \caption{Classification accuracy comparison with other pattern selection approaches.} \label{fig:accuracy_comparison}
\end{figure}

The classification results displayed in Figure \ref{fig:accuracy_comparison} show that \textsc{UnSubPatt} allows a better classification than all the other pattern selection methods in the four cases. Considering only these results does not allow to confirm that \textsc{UnSubPatt} would always outperform the considered methods. However, this proves that \textsc{UnSubPatt} represents a very competitive and promising approach. It is also worth noting that these approaches are supervised and dedicated to classification unlike \textsc{UnSubPatt} which is an unsupervised approach. This allows it to be used in classification as well as in other mining tasks such as clustering and indexing.

\subsection{Runtime Analysis}
To study the variation of \textsc{UnSubPatt}'s runtime with larger amounts of data, we use different sets of frequent patterns from 10000 to 100000 with step-size of 10000. In Table \ref{tab:runtime}, we report the runtime results for the pattern sets using three substitution thresholds.
\begin{table}[!t]
	\centering
	\caption{Runtime analysis of \textsc{UnSubPatt} with different substitution thresholds}
	\label{tab:runtime}
\begin{tabular}{|c|c|c|c|}
\hline 
\textbf{Number of}&\multicolumn{3}{|c|}{\textbf{Substitution thresholds}} \\ 
\cline{2-4} 
\textbf{patterns} & \textbf{$\tau$ = 10\%} & \textbf{$\tau$ = 30\%} & \textbf{$\tau$ = 50\%} \\ 
\hline 
10000 & 4s & 4s & 4s \\ 
\hline 
20000 & 8s & 8s & 10s \\ 
\hline 
30000 & 13s & 13s & 17s \\ 
\hline 
40000 & 18s & 18s & 25s \\ 
\hline 
50000 & 23s & 23s & 33s \\ 
\hline 
60000 & 28s & 28s & 41s \\ 
\hline 
70000 & 35s & 35s & 52s \\ 
\hline 
80000 & 40s & 42s & 66s \\ 
\hline 
90000 & 46s & 49s & 80s \\ 
\hline 
100000 & 53s & 57s & 136s \\ 
\hline 
\end{tabular} 
\end{table} 

Even though the complexity of the problem due to the combinatorial test of substitution between subgraphs, our algorithm is scalable with higher amounts of data. With increasing number of patterns, the runtime is still reasonable. The use of different substitution thresholds slightly affected the runtime of \textsc{UnSubPatt}, since the number of selected patterns is comparable for all thresholds.

A possible way to make \textsc{UnSubPatt} runs faster is parallelization. In fact, \textsc{UnSubPatt} can be easily parallelized, since it tests separately the substitution among each group of subgraphs having the same size and order. Hence, these groups can be distributed and treated separately in different processes.

\section{Conclusion}
In this paper, we proposed a novel selection approach for mining a representative summary of the set of frequent subgraphs. Unlike existent methods that are based on the relations between patterns in the transaction space, our approach considers the distance between patterns in the pattern space. The proposed approach exploits a specific domain knowledge, in the form of a substitution matrix, to select a subset of representative unsubstituted patterns from a given set of frequent subgraphs. It also allows to reduce considerably the size of the initial set of subgraphs to obtain an interesting and representative one enabling easier and more efficient further explorations. It is also worth mentioning that our approach can be used on graphs as well as on sequences and is not limited to classification tasks, but can help in other subgraph-based analysis such as indexing, clustering, visual inspection, etc.

Since the proposed approach is a filter approach, a promising future direction could be to find a way to integrate the selection within the extraction process in order to directly mine the representative patterns from data. Moreover, we intend to use our approach in other classification contexts and in other mining applications.

\section{Acknowledgement}
This work is supported by a PhD grant from the French Ministry of Higher Education to the first author.

%
\bibliographystyle{abbrv}
%
%


\end{document}